\theoremstyle{plain}
\newtheorem{theorem}{Theorem}
\newtheorem{lemma}[theorem]{Lemma}
\newtheorem{proposition}[theorem]{Proposition}
\theoremstyle{definition}
\newcommand{\legendre}[2]{\genfrac{(}{)}{}{}{#1}{#2}}
\newcommand{\eps}{\varepsilon}
\newcommand{\EE}{\mathbb{E}}
\newcommand{\PP}{\mathbb{P}}
\newcommand{\Fp}{\mathbb{F}_p}
\newcommand{\Fq}{\mathbb{F}_q}
\newcommand{\abs}[1]{\left|#1\right|}
\begin{document}

\title{The number of rational points of hyperelliptic curves over subsets of finite fields}

\author{Kristina Nelson}
\address{\noindent Department of Mathematics, University of California, Berkeley, CA 94720-3840}
\email{krisn@math.berkeley.edu}

\author{J\'{o}zsef Solymosi}
\address{\noindent Department of Mathematics, University of British Columbia, Vancouver, BC, Canada V6T 1Z2}
\thanks{The second author was supported by NSERC and Hungarian National Research Development and Innovation Fund K 119528}
\email{solymosi@math.ubc.ca}

\author{Foster Tom}
\address{\noindent Department of Mathematics, University of British Columbia, Vancouver, BC, Canada V6T 1Z2}
\email{foster@math.ubc.ca}

\author{Ching Wong}
\address{\noindent Department of Mathematics, University of British Columbia, Vancouver, BC, Canada V6T 1Z2}
\email{ching@math.ubc.ca}

\date{}

\begin{abstract}
We prove two related concentration inequalities concerning the number of rational points of hyperelliptic curves over subsets of a finite field. In particular, we investigate the probability of a large discrepancy between the numbers of quadratic residues and non-residues in the image of such subsets over uniformly random hyperelliptic curves of given degrees. We find a constant probability of such a high difference and show the existence of sets with an exceptionally large discrepancy.
\end{abstract}

\maketitle

\section{Introduction}
Let $q$ be a prime power and let $\Fq$ be the finite field with $q$ elements. A curve $E:y^2=f(x)$ (together with a point of infinity $\mathcal{O}$) is called an \emph{elliptic curve} over $\Fq$ if $f(x)\in\Fq[x]$ is a cubic polynomial having distinct roots in the algebraic closure $\overline{\Fq}$ of $\Fq$. The set of \emph{rational points} of $E$ in $\Fq$ is
\[
E(\Fq)=\{(x,y)\in\Fq\times\Fq:y^2=f(x)\}\cup\{\mathcal{O}\}.
\]

Suppose that $q$ is odd. Using the fact that there are $(q-1)/2$ invertible quadratic residues and $(q-1)/2$ non-residues in $\Fq$, one can approximate the size of $E(\Fq)$ as follows. For each $x\in\Fq$, the probability of $f(x)$ being a non-zero square in $\Fq$, and hence contributing 2 points to $E(\Fq)$, is about $1/2$. With probability about $1/2$ there is no point in $E(\Fq)$ having the first coordinate $x\in\Fq$. Therefore, $\#E(\Fq)$ is expected to be close to $q+1$. Indeed, Hasse \cite{Has36} proved, in 1936, that the error in this estimate is at most $2\sqrt{q}$:
\[
|\#E(\Fq)-(q+1)|\leq2\sqrt{q}.
\]

Knowledge of $\#E(\Fq)$ is crucial in elliptic curve cryptography (ECC), which is considered to be more efficient than the classical cryptosystems, like RSA \cite{RSA78}. The security of ECC depends on the difficulty of solving the Elliptic Curve Discrete Logarithm Problem (ECDLP). The best known algorithm to solve ECDLP in finite fields is Pollard's Rho Algorithm \cite{Pol75}, which requires $O(\sqrt{p})$ time complexity, where $p$ is the prime factor of $q$. However, some well studied elliptic curves or elliptic curves of certain forms are not good candidates for ECC. For instance, if the number of rational points of an elliptic curve $E$ in $\Fp$ is exactly $p$, where $p$ is a prime, then the running time of solving the ECDLP is $O(\log{p})$, see \cite{Sem98}. Using verifiably random elliptic curves in ECC can ensure higher security. Hyperelliptic curves can also be used in cryptography, see \cite{handbook06} for more details; however, the verifiability of random hyperelliptic curves is much harder, see \cite{HSS01,Sat09}.

In this paper, we investigate the behaviour of random hyperelliptic curves over subsets $S$ of $\Fq$. We are interested in the hyperelliptic curves $E:y^2=f(x)$ where $f(x)$ is a polynomial in $\Fq[x]$ of degree $4k-1$ ($k\geq1$) having distinct roots in $\overline{\Fq}$. Denote by $E(\Fq,S)$ the rational points of $E$ in $\Fq$ where the $x$-coordinate is in $S$, i.e.
\[
E(\Fq,S)=\{(x,y)\in{S}\times\Fq:y^2=f(x)\}.
\]
We remark that the point of infinity $\mathcal{O}$ is not included in $E(\Fq,S)$. The approximation we have described for $\#E(\Fq)$ suggests that the expected value of $\#E(\Fq,S)$ is about $\#S$. For random hyperelliptic curves $E$ over $\Fq$, the probability that the error $|\#E(\Fq,S)-\#S|$ is small has been extensively studied, see \cite{PelRam17,SSS95} for example. 

On the other hand, it is easy to see that there exist many hyperelliptic curves of any (positive) even degree so that the error $|\#E(\Fp,S)-\#S|$ is very large. Indeed, the error is about $\#S$ when $f(x)$ is the square of any non-constant polynomial in $\Fq[x]$, for any $S\subset{\Fp}$.

However, an error bound is not obvious in the case of hyperelliptic curves of odd degree, which we study in the probabilistic setting. Equivalently, we examine the difference between the numbers of quadratic residues and non-residues in the image multiset $f(S)$. Using $4k$-wise independence, we show that all subsets $S$ of $\Fq$ behave similarly, in the sense that the interested discrepancy is proportional to $\sqrt{\#S}$ has a positive probability which depends only on the degree of the curve.

\begin{theorem}
\label{thm:largeProb}
Given a positive integer $k$ and $\eps>0$, there exist $\delta>0$ and a threshold $N$ such that the following holds: for every odd prime power $q>N$, if a curve $E:y^2=f(x)$ is chosen uniformly at random among all degree $4k-1$ hyperelliptic curves over $\Fq$, then with a probability at least $(4\pi^{3/2}/e^3)2^{-2k}-\eps$, we have
\[
|\#E(\Fq,S)-\#S|>\delta\sqrt{\#S},
\]
for any set $S\subset\Fq$ with $\#S\geq{N}$.
\end{theorem}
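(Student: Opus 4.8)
The plan is to convert the discrepancy into a quadratic character sum and then run a moment-based anti-concentration argument. Let $\chi$ be the quadratic character of $\Fq$, with the convention $\chi(0)=0$, so that the number of $y\in\Fq$ with $y^2=f(x)$ equals $1+\chi(f(x))$. Summing over $x\in S$ gives
\[
\#E(\Fq,S)-\#S=\sum_{x\in S}\chi(f(x)),
\]
and I write $T$ for this character sum; the goal becomes a lower bound on $\PP(\abs{T}>\delta\sqrt{\#S})$. The mechanism behind the name ``$4k$-wise independence'' is that a uniformly random polynomial of degree at most $4k-1$ is specified by its $4k$ coefficients, and evaluation at any $4k$ distinct points is an invertible (Vandermonde) linear map; hence the values $f(x_1),\dots,f(x_m)$ at any $m\le 4k$ distinct points are jointly uniform on $\Fq^m$, and the signs $\chi(f(x_1)),\dots,\chi(f(x_m))$ are independent, each equal to $\pm 1$ with probability $(q-1)/(2q)$ and to $0$ with probability $1/q$. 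Passing to the squarefree polynomials of degree exactly $4k-1$ discards only an $O(1/q)$ fraction (a leading-coefficient condition and a discriminant condition), so the conditioned model agrees with this idealized one up to $O(1/q)$ in total variation.

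First I would compute the $2k$-th and $4k$-th moments of $T$. Expanding $\EE[T^{2k}]=\sum_{x_1,\dots,x_{2k}}\EE[\prod_j\chi(f(x_j))]$ and grouping the terms by the coincidence pattern of the indices, any point occurring an odd number of times annihilates the expectation, while a point occurring an even number of times contributes $\EE[\chi(f(x))^{\text{even}}]=(q-1)/q$; since at most $2k\le 4k$ distinct points appear, independence applies term by term. The dominant contribution comes from the $(2k-1)!!$ perfect matchings on $k$ distinct points, so that $\EE[T^{2k}]=(2k-1)!!\,(\#S)^k(1+o(1))$, the error absorbing the partitions with fewer than $k$ blocks (of size $O((\#S)^{k-1})$) and the $O(1/q)$ model discrepancy. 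The same computation with $4k$ factors, whose leading matchings use $2k\le 4k$ distinct points and thus remain within the range of independence, gives $\EE[T^{4k}]=(4k-1)!!\,(\#S)^{2k}(1+o(1))$. I would make these uniform: for each $\eta>0$ there are thresholds beyond which, for $q>N$ and $\#S\ge N$, both moments lie within a factor $1\pm\eta$ of their idealized values.

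With both moments controlled, I would finish by a Paley--Zygmund split. Setting $A=\{\abs{T}>\delta\sqrt{\#S}\}$, using $\abs{T}\le\delta\sqrt{\#S}$ off $A$ and Cauchy--Schwarz on $A$,
\[
\EE[T^{2k}]\le\delta^{2k}(\#S)^k+\sqrt{\EE[T^{4k}]}\,\sqrt{\PP(A)},
\]
which rearranges to
\[
\PP(A)\ge\left(\frac{(2k-1)!!\,(1-\eta)-\delta^{2k}}{\sqrt{(4k-1)!!\,(1+\eta)}}\right)^2,
\]
whose limit as $\eta,\delta\to 0$ is $((2k-1)!!)^2/(4k-1)!!=((2k)!)^3/((k!)^2(4k)!)$. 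Applying Stirling in the explicit forms $n!\ge\sqrt{2\pi n}(n/e)^n$ to the three factors $(2k)!$ and $n!\le e\sqrt{n}(n/e)^n$ to $(k!)^2$ and $(4k)!$ collapses this to
\[
\frac{((2k)!)^3}{(k!)^2(4k)!}\ge\frac{4\pi^{3/2}}{e^3}2^{-2k},
\]
so choosing $\delta$ small and $N$ large yields the stated bound $(4\pi^{3/2}/e^3)2^{-2k}-\eps$, uniformly over all $S$ with $\#S\ge N$. The hard part will be the second step: showing that mere $4k$-wise independence determines both moments to leading order, and that the departures from the idealized model -- the squarefree and degree conditioning, the rare events $\chi(f(x))=0$, and the non-matching partitions -- disturb the moments only at lower order, uniformly in $q$ and $\#S$; once these two moments are pinned down, the anti-concentration and Stirling estimates are routine.
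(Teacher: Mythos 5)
Your plan has the same skeleton as the paper's proof: reduce the discrepancy to the character sum $T=\sum_{x\in S}\chi(f(x))$, get $4k$-wise independence from the Vandermonde/coefficient argument, compute the $2k$-th and $4k$-th moments from even-multiplicity patterns, and finish with a two-moment anti-concentration bound plus Stirling. Your constants check out: $(2k-1)!!$ and $(4k-1)!!$ are the paper's $\EE_{2k}$ and $\EE_{4k}$ up to the $(1-1/q)$ factors and $O_k(1/\#S)$ errors; your Cauchy--Schwarz/Paley--Zygmund split is a slightly weaker but equally serviceable variant of the paper's Chebyshev bound about an optimized center (Proposition~\ref{prop:elliptic}), both tending to $\EE_{2k}^2/\EE_{4k}$ as $\delta\to0$; and your explicit Stirling bounds $n!\ge\sqrt{2\pi n}(n/e)^n$ and $n!\le e\sqrt{n}(n/e)^n$ do give $((2k)!)^3/((k!)^2(4k)!)\ge(4\pi^{3/2}/e^3)2^{-2k}$, which is exactly the paper's estimate \eqref{eq:E2k^2OverE4k}.

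The genuine gap is in how you pass from random polynomials to honest hyperelliptic curves. You compute the moments in the conditioned model (squarefree, degree exactly $4k-1$) and absorb the ``model discrepancy'' into the $(1+o(1))$, justified by an $O(1/q)$ total variation bound. But total variation controls only expectations of functions bounded by $1$; since $T^{2k}$ can be as large as $(\#S)^{2k}$, the error this argument yields in $\EE[T^{2k}]$ is $O((\#S)^{2k}/q)$, which overwhelms the main term $(2k-1)!!\,(\#S)^k$ as soon as $(\#S)^k\gg q$ --- and the theorem must hold uniformly for sets as large as $\#S\sim q$. Nothing in the TV bound rules out, for instance, that a constant fraction of the mass of $T^{2k}$ sits on the discarded polynomials; genuinely degenerate ones with $|T_f|\sim\#S$ (such as $f=cg^2$) do live in the discarded set, and controlling the whole non-squarefree stratum (e.g.\ $f=g^2h$ with $h$ nonconstant squarefree) uniformly in $\#S$ requires input beyond $4k$-wise independence, such as Weil-type bounds. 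This is exactly the step you flag as ``the hard part,'' and it is hard --- but it is also unnecessary. The repair is to reorder the argument as the paper does: prove the moment estimates and the anti-concentration inequality entirely in the unconditioned model of all $q^{4k}$ polynomials of degree at most $4k-1$, where the independence is exact, and pass to hyperelliptic curves only at the level of the final event, using that the excluded polynomials form an $O(1/q)$ fraction (Lemma~\ref{lem:constantc_k}), so that
\[
\PP_{\mathrm{hyper}}\left(\abs{T}>\delta\sqrt{\#S}\right)\ge\PP_{\mathrm{all}}\left(\abs{T}>\delta\sqrt{\#S}\right)-\dfrac{c_{q,k}}{q},
\]
which is inequality \eqref{eq:probabilityInequality}. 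With that reordering, the rest of your argument goes through as written.
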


\begin{theorem}
\label{thm:smallProb}
Given a positive integer $k$, there exist a threshold $N$ and $\eps>0$ such that the following holds: for every odd prime power $q>N$, if a curve $E:y^2=f(x)$ is chosen uniformly at random among all degree $4k-1$ hyperelliptic curves over $\Fq$, then with a probability at least $\eps$, we have
\[
|\#E(\Fq,S)-\#S|>0.8577\sqrt{k}\sqrt{\#S},
\]
for any set $S\subset\Fq$ with $\#S\geq{N}$.
\end{theorem}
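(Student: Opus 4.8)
The plan is to convert the quantity $\#E(\Fq,S)-\#S$ into a quadratic-character sum and then prove an \emph{anti}-concentration (lower-tail) estimate for it by computing high moments and invoking the Paley--Zygmund inequality.

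First I would rewrite the count. Let $\chi$ be the quadratic character of $\Fq$, extended by $\chi(0)=0$, so that the number of $y\in\Fq$ with $y^2=a$ equals $1+\chi(a)$. Summing over $x\in S$ gives
\[
\#E(\Fq,S)-\#S=\sum_{x\in S}\chi(f(x))=:T,
\]
so the target becomes $\PP\big(|T|>0.8577\sqrt{k}\sqrt{\#S}\big)\ge\eps$. Writing $X_x=\chi(f(x))\in\{-1,0,1\}$, each marginal is symmetric with $\EE[X_x]=0$ and $\EE[X_x^{2j}]=(q-1)/q\to1$, since $f(x)$ is (nearly) uniform on $\Fq$ as $f$ varies.

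The structural input is $4k$-wise independence. Polynomials of degree at most $4k-1$ form a $4k$-dimensional space on which, by Lagrange interpolation, evaluation at any $4k$ distinct points is a bijection onto $\Fq^{4k}$; hence for uniform $f$ of degree ${<}4k$ the family $(X_x)_{x\in S}$ is exactly $4k$-wise independent. Restricting to squarefree $f$ of degree exactly $4k-1$ perturbs the joint law of any bounded number of evaluations by $O(1/q)$, which I would carry along as a vanishing error. I would then compute the two relevant moments by expansion: a monomial $\EE[X_{x_1}\cdots X_{x_m}]$ survives only when every distinct index has even multiplicity, and the dominant pairing contribution yields
\[
\EE[T^{2k}]=(1+o(1))(2k-1)!!\,(\#S)^{k},\qquad \EE[T^{4k}]=(1+o(1))(4k-1)!!\,(\#S)^{2k},
\]
the errors being lower order in $\#S$ together with the $O(1/q)$ terms. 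This is precisely where full $4k$-wise independence is spent, and it is what fixes the curve degree at $4k-1$.

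Finally I would apply Paley--Zygmund to $Z=T^{2k}\ge0$: for $0<\theta<1$,
\[
\PP\big(|T|>(\theta\,\EE[T^{2k}])^{1/(2k)}\big)\ge(1-\theta)^2\frac{(\EE[T^{2k}])^2}{\EE[T^{4k}]}\ge(1-\theta)^2\frac{((2k-1)!!)^2}{(4k-1)!!}(1+o(1)),
\]
a strictly positive constant. For the threshold, Stirling's bounds give $((2k-1)!!)^{1/(2k)}>\sqrt{2/e}\,\sqrt{k}>0.8577\sqrt{k}$, with a strict gap for every fixed $k$ (indeed $((2k-1)!!)^{1/(2k)}/\sqrt{k}\to\sqrt{2/e}=0.85776\ldots$, always exceeding $0.8577$). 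So I may fix $\theta<1$ close enough to $1$ that $\theta^{1/(2k)}((2k-1)!!)^{1/(2k)}>0.8577\sqrt{k}$, whence for $N$ large the threshold $(\theta\,\EE[T^{2k}])^{1/(2k)}$ exceeds $0.8577\sqrt{k}\sqrt{\#S}$; taking $\eps=\tfrac12(1-\theta)^2((2k-1)!!)^2/(4k-1)!!$ then finishes the proof, uniformly over all $S$ with $\#S\ge N$. The hard part will be the moment bookkeeping: proving that the non-pairing monomials and the corrections from passing to the squarefree, degree-exactly-$4k-1$ model are genuinely of lower order uniformly in $S$; by contrast the gap $((2k-1)!!)^{1/(2k)}>\sqrt{2/e}\,\sqrt{k}$ is a clean consequence of Stirling's estimates.
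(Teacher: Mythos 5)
Your proposal is correct, and its skeleton is the paper's own: reduce to the Legendre-symbol sum $T=\sum_{x\in S}\chi(f(x))$ (the paper's \eqref{eq:discrepancyQRNR}), use exact $4k$-wise independence of evaluations of a uniform polynomial of degree at most $4k-1$ (Lemma \ref{lem:expectationOddEven}), compute $\EE[T^{2k}]\sim(2k-1)!!\,n^{k}$ and $\EE[T^{4k}]\sim(4k-1)!!\,n^{2k}$ (Lemma \ref{lem:higherMoments}; note $(2k-1)!!=(2k)!/(2^{k}k!)$ is exactly the paper's leading coefficient), and subtract an $O(1/q)$ correction for the non-hyperelliptic polynomials (Lemma \ref{lem:constantc_k} and \eqref{eq:probabilityInequality}). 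Where you genuinely diverge is the anti-concentration step. The paper proves Proposition \ref{prop:elliptic}\eqref{eq:propSmallProb} by a bespoke second-moment argument: Chebyshev applied to $(T/\sqrt n)^{2k}-c^{k}$ as in \eqref{eq:markov}, followed by tuning $c$, $\lambda$ and an auxiliary parameter $\eta=\eps c^{k}$ through binomial-series expansions, to show one can reach threshold $\sqrt[2k]{\EE_{2k}}-\eps^{\frac12-o(1)}$ with probability at least $\eps$. You instead invoke the off-the-shelf Paley--Zygmund inequality for $Z=T^{2k}$, getting threshold $\theta^{1/(2k)}\bigl(\EE[T^{2k}]\bigr)^{1/(2k)}$ with probability $(1-\theta)^2\,(\EE[T^{2k}])^2/\EE[T^{4k}]$; writing $\eps$ for that probability, letting $\theta\to1$ gives a threshold deficit of order $\sqrt{\eps}$, so you recover the same quantitative trade-off with essentially no computation, replacing the most delicate expansion in the paper by a one-line standard inequality. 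Both routes then rest on the same Stirling fact, and your strict-gap claim is right: Robbins-type bounds give $(2k-1)!!\geq\sqrt2\,(2k/e)^{k}e^{-1/(12k)}>(2k/e)^{k}$, hence $((2k-1)!!)^{1/(2k)}>\sqrt{2k/e}>0.8577\sqrt k$, which is what leaves room to pick $\theta<1$, just as \eqref{eq:2kRootE_2k} leaves room for the $\eps^{\frac12-o(1)}$ loss in the paper. What the paper's longer derivation buys is an explicit, reusable statement of how the attainable threshold degrades as a function of $\eps$; for the theorem as stated, your argument suffices and is cleaner. In the write-up, just make the two bookkeeping points explicit: the $(1+o(1))$ moment errors are uniform over all $S$ of size $n$ (they depend only on $n$ and $q$), and the $O(1/q)$ hyperelliptic correction should be subtracted additively from the Paley--Zygmund bound, as in \eqref{eq:probabilityInequality}, rather than absorbed into it.
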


These two theorems imply that one can expect large deviation of magnitude $\sqrt{\#S}$. In the last section, we show that for small sets $S$ of prime fields $\Fp$, the error is often much larger.

\section{Preliminaries}
\label{sec:ellipticPreliminaries}
Throughout this section, let $q$ be an odd prime power and let $n,k$ be positive integers such that $4k<n\leq{q}$. Suppose $S=\{s_1,\ldots,s_n\}\subset\Fq$, and
\[
f(x)=\sum_{j=0}^{4k-1}a_{j}x^{j}\in\Fq[x]
\]
is chosen uniformly at random. 

We denote by $\#QR$, $\#NR$ and $\#R$ the numbers of $s_i\in{S}$ such that $f(s_i)$ is an invertible quadratic residue, a quadratic non-residue and zero in $\Fq$, respectively. Then, $n=\#QR+\#NR+\#R$. It follows that, provided the curve $E:y^2=f(x)$ forms a degree $4k-1$ hyperelliptic curve over $\Fq$, the discrepancy we are interested in is
\begin{equation}
\label{eq:discrepancyQRNR}
\abs{\#E(\Fq,S)-n}=\abs{2\,\#QR+\#R-n}=\abs{\#QR-\#NR}.
\end{equation}
This suggests we look at the random variables $X_i=\legendre{f(s_i)}{q}$, where $\legendre{a}{q}$, is the Legendre symbol defined as
\[
\legendre{a}{q}=\begin{cases}
0,&\mbox{if $a$ is the zero in $\Fq$},\\
1,&\mbox{if $a$ is a non-zero square in $\Fq$},\\
-1,&\text{otherwise.}
\end{cases}
\]

We note that among all polynomials $f(x)\in\Fq[x]$ of degree at most 3, only a small fraction fail to form elliptic curves. Indeed, the exceptions, where $f(x)$ has degree strictly less than three or has multiple roots, contribute $q^3+q^2(q-1)$ of all the $q^4$ polynomials considered. When $q$ is large, such exceptions are negligible. This situation generalizes to hyperelliptic curves.

\begin{lemma}
\label{lem:constantc_k}
Let $q$ be a prime power and let $k$ be a positive integer. There is a constant $c=c_{q,k}$ such that all but at most a fraction $c/q$ of the polynomials in $\Fq[x]$ of degree at most $4k-1$ define a hyperelliptic curve of degree $4k-1$ over $\Fq$. 
\end{lemma}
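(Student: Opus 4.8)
The plan is to bound the number of ``bad'' polynomials---those of degree at most $4k-1$ that do \emph{not} define a degree $4k-1$ hyperelliptic curve---and show they make up at most a $2/q$ fraction, so that $c=2$ suffices. A polynomial $f\in\Fq[x]$ with $\deg f\le 4k-1$ defines such a curve exactly when two conditions hold: first, $\deg f=4k-1$, i.e.\ the leading coefficient $a_{4k-1}$ is nonzero; and second, $f$ has distinct roots in $\overline{\Fq}$, which is equivalent to $f$ being squarefree. The total number of polynomials under consideration is $q^{4k}$, since we choose the $4k$ coefficients $a_0,\dots,a_{4k-1}$ freely. I would therefore count the complement of the event that both conditions hold.

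First I would handle the degree condition: there are exactly $q^{4k-1}$ polynomials with $a_{4k-1}=0$, a fraction $1/q$ of the total. Next, among the remaining polynomials of degree exactly $4k-1$, I would count those that fail to be squarefree. The key input is the classical enumeration of squarefree polynomials: the number of monic squarefree polynomials of degree $m\ge 2$ over $\Fq$ equals $q^m-q^{m-1}$, so the number of monic \emph{non}-squarefree ones is $q^{m-1}$. I would obtain this from unique factorization, which gives a bijection $f\mapsto(s,t)$ writing each monic $f$ uniquely as $f=s\,t^2$ with $s$ monic squarefree and $t$ monic; at the level of generating functions this reads $\frac{1}{1-qu}=\bigl(\sum_{m}r_m u^m\bigr)\frac{1}{1-qu^2}$, where $r_m$ counts monic squarefree polynomials of degree $m$, whence $\sum_m r_m u^m=\frac{1-qu^2}{1-qu}$ and $r_m=q^m-q^{m-1}$ for $m\ge 2$. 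Since squarefreeness is unchanged under scaling by a nonzero constant, multiplying by the $q-1$ choices of leading coefficient yields $(q-1)q^{4k-2}$ non-squarefree polynomials of degree exactly $4k-1$.

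Adding the two disjoint families of bad polynomials gives $q^{4k-1}+(q-1)q^{4k-2}=(2q-1)q^{4k-2}$, a fraction $(2q-1)/q^2\le 2/q$ of all $q^{4k}$ polynomials, so $c=2$ works (indeed $c_{q,k}=2-1/q$ gives the exact fraction). I do not anticipate a real obstacle; the only nonroutine ingredient is the squarefree count. If one prefers to avoid the generating-function identity, it can be replaced by the crude union bound that a monic non-squarefree polynomial of degree $m$ is divisible by $g^2$ for some monic $g$ with $\deg g\ge 1$, giving at most $\sum_{d\ge 1}q^{d}\,q^{m-2d}=\sum_{d\ge1}q^{m-d}\le q^m/(q-1)$ such polynomials; this still produces a bound of the shape $c/q$, which is all the lemma demands.
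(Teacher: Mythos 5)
Your proof is correct and takes essentially the same approach as the paper: the paper counts the exceptional polynomials in the case $k=1$ as $q^3+q^2(q-1)$ out of $q^4$ (degree drop plus repeated roots) and asserts that this generalizes, and your count $q^{4k-1}+(q-1)q^{4k-2}$ is exactly that generalization, giving the fraction $(2q-1)/q^2\le 2/q$. The only material you add is the explicit justification of the squarefree enumeration $q^{m}-q^{m-1}$, which the paper treats as standard.
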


Hence, the probability that, among all degree $4k-1$ hyperelliptic curves over $\Fq$, the discrepancy \eqref{eq:discrepancyQRNR} is larger than some $\delta\sqrt{n}$ is at least the probability that, among all polynomials of degree at most $4k-1$ over $\Fq$, the absolute value of the sum of the random variables $X_i$ is larger than the same $\delta\sqrt{n}$ minus $c_{q,k}/q$, i.e.
\begin{equation}
\label{eq:probabilityInequality}
\PP(\abs{\#E(\Fq,S)-n}>\delta\sqrt{n})\geq\PP\left(\abs{\sum_{i=1}^nX_i}>\delta\sqrt{n}\right)-\dfrac{c_{q,k}}{q}.
\end{equation}

In the next two subsections, we will first estimate the higher moments
\[
\EE_{j}:=\EE\left(\left(\dfrac{1}{\sqrt{n}}\sum_{i=1}^{n}X_i\right)^{j}\right),\quad\text{where }1\leq{j}\leq{4k},
\]
by finding their main order, and then give lower bounds on the interested probabilities involving the random variables $X_i$'s.

\subsection{Estimating $\EE_{2k}$ and $\EE_{4k}$}
Since $f(x)\in\Fq[x]$ is a random polynomial of degree at most $4k-1$, the random variables $X_i$ exhibit $4k$-wise independence. Indeed, by solving a system of linear equations, the number of polynomials $f(x)$ in $\Fq[x]$ of degree at most $4k-1$ satisfying
\[
f(s_{i_1})=r_1,\quad f(s_{i_2})=r_2,\quad\ldots,\quad f(s_{i_\ell})=r_\ell,
\]
is exactly $q^{4k-\ell}$, given $\ell\leq4k$, $r_1,\ldots,r_\ell\in\Fq$ and distinct $i_1,\ldots,i_\ell\in\{1,\ldots,n\}$. Thus,
\[
\begin{split}
&\quad\,\,\EE(X_{i_1}^{h_1}\cdots{X_{i_\ell}^{h_{\ell}}})\\
&=\sum_{r_1,\ldots,r_\ell\in\Fq}\PP(f(s_{i_1})=r_1,\ldots,f(s_{i_\ell})=r_\ell)\legendre{r_1}{q}^{h_1}\cdots\legendre{r_\ell}{q}^{h_\ell}\\
&=\sum_{r_1,\ldots,r_\ell\in\Fq}\dfrac{q^{4k-\ell}}{q^{4k}}\legendre{r_1}{q}^{h_1}\cdots\legendre{r_\ell}{q}^{h_\ell}\\
&=\left[\sum_{r_1\in\Fq}\dfrac{1}{q}\legendre{r_1}{q}^{h_1}\right]\cdots\left[\sum_{r_\ell\in\Fq}\dfrac{1}{q}\legendre{r_\ell}{q}^{h_\ell}\right]\\
&=\left[\sum_{r_1\in\Fq}\PP(f(s_{i_1})=r_1)\legendre{r_1}{q}^{h_1}\right]\cdots\left[\sum_{r_\ell\in\Fq}\PP(f(s_{i_\ell})=r_\ell)\legendre{r_\ell}{q}^{h_\ell}\right]\\
&=\EE(X_{i_1}^{h_1})\cdots\EE(X_{i_\ell}^{h_\ell})
\end{split}
\]

We also note that the random variables $X_i$ only take the values $0,1,-1$, and so $X_i^{2h-1}=X_i$ and $X_i^{2h}=X_i^2$, for all $h\geq1$. Therefore we have
\[
\EE(X_i^{2h-1})=\EE(X_i)=\sum_{r\in\Fq}\PP(f(s_i)=r)\legendre{r}{q}=\sum_{r\in\Fq}\dfrac{1}{q}\legendre{r}{q}=0,
\]
and
\[
\EE(X_i^{2h})=\EE(X_i^2)=\sum_{r\in\Fq}\PP(f(s_i)=r)\legendre{r}{q}^2=\sum_{r\in\Fq}\dfrac{1}{q}\legendre{r}{q}^2=1-\dfrac{1}{q}.
\]

To summarize the above two observations, we have the following lemma:
\begin{lemma}
\label{lem:expectationOddEven}
Let $\ell\leq4k$, let $h_1,\ldots,h_\ell$ be positive integers, and let $i_1\,\ldots,i_\ell$ be distinct numbers from $\{1,\ldots,n\}$. Then,
\[
\EE(X_{i_1}^{h_1}\cdots{X_{i_\ell}^{h_{\ell}}})=
\begin{cases}
\left(1-\dfrac{1}{q}\right)^\ell,&\mbox{if $h_1,\ldots,h_\ell$ are all even numbers},\\
0,&\text{otherwise.}
\end{cases}
\]
\end{lemma}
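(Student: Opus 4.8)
The plan is to assemble the lemma directly from the two observations established immediately above, namely the $4k$-wise independence of the variables $X_i$ and the explicit single-variable moment computations. First I would invoke the independence: since $\ell\leq4k$ and the indices $i_1,\ldots,i_\ell$ are distinct, the multiplicativity derived in the displayed calculation gives
\[
\EE(X_{i_1}^{h_1}\cdots X_{i_\ell}^{h_\ell})=\EE(X_{i_1}^{h_1})\cdots\EE(X_{i_\ell}^{h_\ell}),
\]
reducing the joint expectation to a product of $\ell$ single-variable moments. This is the step where the hypothesis $\ell\leq4k$ is genuinely consumed, as it is exactly what licenses the factorization.

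Next I would evaluate each factor using the fact that $X_i$ takes only the values $0,1,-1$, so that $X_i^{2h-1}=X_i$ and $X_i^{2h}=X_i^2$ for every $h\geq1$. By the two computations already carried out, this yields $\EE(X_{i_j}^{h_j})=0$ whenever $h_j$ is odd, and $\EE(X_{i_j}^{h_j})=1-1/q$ whenever $h_j$ is even.

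Finally I would combine the two cases. If every $h_j$ is even, all $\ell$ factors equal $1-1/q$, so the product is $(1-1/q)^\ell$, giving the first branch of the formula. If at least one $h_j$ is odd, the corresponding factor vanishes and forces the entire product to be $0$, giving the second branch. Since both key ingredients are already in hand, there is no substantive obstacle here; the lemma is purely a packaging of the preceding observations. The only points meriting attention are ensuring that the distinctness of the indices $i_1,\ldots,i_\ell$ is preserved throughout (so that independence applies) and that the two exhaustive cases ``all even'' versus ``at least one odd'' are handled cleanly.
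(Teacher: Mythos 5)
Your proposal is correct and mirrors the paper exactly: the paper presents this lemma as a summary of the two observations you cite (the factorization via $4k$-wise independence and the single-variable moments $\EE(X_i^{2h-1})=0$, $\EE(X_i^{2h})=1-1/q$), combined in precisely the way you describe.
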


Before we estimate the general $\EE_{j}$, let us compute $\EE_6$ (when $k\geq2$) as a toy version.
\[
\begin{split}
\EE_6
&=\EE\left(\dfrac{1}{\sqrt{n}}\sum_{i=1}^{n}X_i\right)^{6}\\
&=\dfrac1{n^3}\left(\sum_{i=1}^n\EE(X_i^6)+\dfrac{6!}{4!2!}\sum_{i\neq{j}}\EE(X_i^4X_j^2)+\dfrac{6!}{2!2!2!}\sum_{i<j<k}\EE(X_i^2X_j^2X_k^2)\right)\\
&=\dfrac1{n^3}\left(n\left(1-\dfrac1q\right)+15n(n-1)\left(1-\dfrac1q\right)^2+90\binom{n}{3}\left(1-\dfrac1q\right)^3\right)\\
&=15\left(1-\dfrac1q\right)^3-\dfrac{15}{n}\left(1-\dfrac1q\right)^2\left(2-\dfrac3q\right)+\dfrac{1}{n^2}\left(1-\dfrac1q\right)\left(16-\dfrac{45}{q}+\dfrac{30}{q^2}\right)
\end{split}
\]
We derive in the lemma below how the number $15$ in the leading term can be expressed in terms of $j=6$.

\begin{lemma}
\label{lem:higherMoments}
For $1\leq{j}\leq{4k}$, we have
\[
\EE_{j}=
\begin{cases}
\dfrac{j!}{2^{j/2}(j/2)!}+O_j\left(\dfrac1n\right),\quad\text{as}\quad{n}\to\infty,&\mbox{if $j$ is an even number,}\\
0,&\text{otherwise.}
\end{cases}
\]
\end{lemma}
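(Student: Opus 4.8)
The plan is to expand the $j$th power of the sum by the multinomial theorem and organise the resulting terms according to their \emph{partition type}, then use Lemma~\ref{lem:expectationOddEven} to discard every contribution except those coming from pairings. Writing
\[
\EE_j=\frac{1}{n^{j/2}}\sum_{(i_1,\ldots,i_j)\in\{1,\ldots,n\}^j}\EE(X_{i_1}\cdots X_{i_j}),
\]
I would group the ordered tuples $(i_1,\ldots,i_j)$ by the set partition $P$ of the position set $\{1,\ldots,j\}$ that records which coordinates carry equal indices. If $P$ has blocks $B_1,\ldots,B_\ell$, then the number of tuples realising exactly $P$ is the falling factorial $n(n-1)\cdots(n-\ell+1)$, and each such term has the common expectation $\EE(X_{i_1}^{|B_1|}\cdots X_{i_\ell}^{|B_\ell|})$ with $i_1,\ldots,i_\ell$ distinct.

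The odd case is then immediate: if $j$ is odd, every set partition of $\{1,\ldots,j\}$ has a block of odd size, so by Lemma~\ref{lem:expectationOddEven} every term vanishes and $\EE_j=0$.

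For even $j$, the same lemma shows that only partitions all of whose blocks have even size survive, each contributing a factor $(1-1/q)^\ell$. The number of blocks $\ell$ is largest, namely $j/2$, exactly when every block has size $2$, i.e. when $P$ is a perfect matching; any surviving partition with a block of size at least $4$ has $\ell\le j/2-1$. Since a partition into $\ell$ blocks contributes $O_j(n^\ell)$ tuples and the total number of set partitions of $\{1,\ldots,j\}$ depends only on $j$, the non-matching partitions together contribute only $O_j(1/n)$ after dividing by $n^{j/2}$. The perfect matchings number $(j-1)!!=j!/(2^{j/2}(j/2)!)$, each of which contributes $n(n-1)\cdots(n-j/2+1)(1-1/q)^{j/2}$; this is the combinatorial identity underlying the coefficient $15$ in the preceding computation of $\EE_6$.

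It remains to collect the leading term. Because the standing hypothesis $4k<n\le q$ forces $q\ge n$, both $n(n-1)\cdots(n-j/2+1)/n^{j/2}=1+O_j(1/n)$ and $(1-1/q)^{j/2}=1+O_j(1/n)$, so the matching contribution equals $\frac{j!}{2^{j/2}(j/2)!}\bigl(1+O_j(1/n)\bigr)$, as claimed. The only delicate point is the bookkeeping: one must confirm that the perfect matchings of $\{1,\ldots,j\}$ number exactly $j!/(2^{j/2}(j/2)!)$ and that all lower-order contributions are uniformly $O_j(1/n)$, which follows since both the number of partition types and the factors $(1-1/q)^\ell$ are controlled purely in terms of $j$ together with the bound $q\ge n$.
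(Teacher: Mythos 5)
Your proof is correct and follows essentially the same route as the paper's: both expand the $j$-th power, invoke Lemma~\ref{lem:expectationOddEven} to kill every term containing an odd exponent, identify the all-exponents-equal-$2$ (perfect-matching) terms as the unique top-order contribution with count $j!/(2^{j/2}(j/2)!)$, and absorb all remaining terms into $O_j(1/n)$ using the standing hypothesis $n\leq q$. The only difference is bookkeeping --- you organize the expansion by set partitions of the $j$ positions (Wick-style pairings with falling-factorial counts), while the paper uses the multinomial theorem and groups exponent vectors by their number of nonzero entries; the two tallies agree since $\binom{n}{j/2}\,\dfrac{j!}{2^{j/2}} = n(n-1)\cdots(n-j/2+1)\cdot\dfrac{j!}{2^{j/2}(j/2)!}$.
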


\begin{proof}
If $j$ is an odd number, then every term in the multinomial expansion has at least one odd index, and hence vanishes by Lemma \ref{lem:expectationOddEven}.

Suppose now that $j$ is an even integer. Using the multinomial theorem and Lemma \ref{lem:expectationOddEven}, we have
\[
\begin{split}
\EE_j
&=\dfrac{1}{n^{j/2}}\EE\left(\left(\sum_{i=1}^nX_i\right)^j\right)\\
&=\dfrac{1}{n^{j/2}}\EE\left(\sum_{h_1+\cdots+h_n=j}\dfrac{j!}{h_1!\cdots{h_n!}}\prod_{t=1}^nX_t^{h_t}\right)\\
&=\dfrac{1}{n^{j/2}}\sum_{h_1+\cdots+h_n=j}\dfrac{j!}{h_1!\cdots{h_n!}}\EE\left(\prod_{t=1}^nX_t^{h_t}\right)\\
&=\dfrac{1}{n^{j/2}}\sum_{\substack{h_1+\cdots+h_n=j\\h_i\text{ even}}}\dfrac{j!}{h_1!\cdots{h_n!}}\left(1-\dfrac{1}{q}\right)^{\#\{i:h_i>0\}}\\
&=\dfrac{1}{n^{j/2}}\sum_{m=1}^{j/2}\left(1-\dfrac{1}{q}\right)^mH(j,m),
\end{split}
\]
where
\[
H(j,m)=\sum_{\substack{h_1+\cdots+h_n=j\\h_i\text{ even}\\\#\{i:h_i>0\}=m}}\dfrac{j!}{h_1!\cdots{h_n!}}=\binom{n}{m}\sum_{\substack{h'_1+\cdots+h'_m=j\\h'_i>0\text{ even}}}\dfrac{j!}{h'_1!\cdots{h'_m!}}
\]
is a polynomial (with integer coefficients) in $n$ of degree $m$. Therefore, the leading term of $\EE_j$ comes from the summand where $m=j/2$. In this case, $h'_i=2$ for every $1\leq{i}\leq{j/2}$ and so
\[
H(j,j/2)=\binom{n}{j/2}\dfrac{j!}{2^{j/2}}
\]
has leading term
\[
\dfrac{j!}{(j/2)!2^{j/2}}n^{j/2}.
\]
It follows that
\[
\begin{split}
\EE_j
&=\dfrac{1}{n^{j/2}}\left(\left(1-\dfrac{1}{q}\right)^{j/2}\dfrac{j!}{(j/2)!2^{j/2}}n^{j/2}+\cdots\right)\\
&=\left(1-\dfrac{1}{q}\right)^{j/2}\dfrac{j!}{(j/2)!2^{j/2}}+O_j\left(\dfrac{1}{n}\right)\\
&=\dfrac{j!}{(j/2)!2^{j/2}}+O_j\left(\dfrac{1}{n}\right),
\end{split}
\]
as $n\to\infty$.
\end{proof}

In particular, for each fixed $k$,
\[
\EE_{2k}=\dfrac{(2k)!}{2^kk!}+O_k\left(\dfrac{1}{n}\right)
\]
is bounded uniformly in $n\geq1$. As a consequence, one can have the following estimates, which will be used later in our proof, using Stirling's approximation. For all fixed $k\geq1$, we have
\begin{equation}
\label{eq:2kRootE_2k}
\sqrt[2k]{\EE_{2k}}\geq\sqrt{\dfrac{2k}{e}}+O_k\left(\dfrac1n\right),
\end{equation}
and
\begin{equation}
\label{eq:E2k^2OverE4k}
\dfrac{\EE_{2k}^2}{\EE_{4k}}\geq\left(\dfrac{\sqrt{2\pi}}{e}\right)^32^{1/2-2k}+O_k\left(\dfrac1n\right),
\end{equation}
as $n\to\infty$.

\subsection{Lower bounding the probabilities}

\begin{proposition}
\label{prop:elliptic}
Under the setting stated in the beginning of this section, we have
\begin{equation}
\label{eq:propLargeProb}
\PP\left(\left|\dfrac{1}{\sqrt{n}}\sum_{i=1}^{n}X_i\right|>\delta\right)\geq\dfrac{(\EE_{2k}-\delta^{2k})^2}{\EE_{4k}-2\delta^{2k}\EE_{2k}+\delta^{4k}},
\end{equation}
for any  $0<\delta<1/2$, and
\begin{equation}
\label{eq:propSmallProb}
\PP\left(\left|\dfrac{1}{\sqrt{n}}\sum_{i=1}^{n}X_i\right|\geq\sqrt[2k]{\EE_{2k}}-\eps^{\frac12-o(1)}\right)\geq\eps>0,
\end{equation}
as $\eps\to0$.
\end{proposition}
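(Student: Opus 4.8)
The plan is to treat both bounds as second-moment (Paley--Zygmund-type) estimates for the single nonnegative random variable $Z=Y^{2k}$, where $Y=\frac{1}{\sqrt n}\sum_{i=1}^n X_i$, whose moments $\EE(Z)=\EE_{2k}$ and $\EE(Z^2)=\EE_{4k}$ are controlled by Lemma~\ref{lem:higherMoments}. Since $Y^{2k}=|Y|^{2k}$ and $t\mapsto t^{2k}$ is strictly increasing on $[0,\infty)$, the events $\{|Y|>\delta\}$ and $\{Z>\delta^{2k}\}$ coincide, so both inequalities reduce to tail estimates for $Z$.

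For \eqref{eq:propLargeProb} I would set $W=Z-\delta^{2k}$ and note that $\EE(W)=\EE_{2k}-\delta^{2k}$ while $\EE(W^2)=\EE_{4k}-2\delta^{2k}\EE_{2k}+\delta^{4k}$, which is exactly the claimed denominator. The one thing to check is positivity $\EE(W)>0$: applying Jensen's inequality to $Z=(Y^2)^k$ gives $\EE_{2k}\geq(\EE(Y^2))^k=(1-\tfrac1q)^k$, so for an odd prime power $q\geq3$ and $0<\delta<\tfrac12$ one has $\EE_{2k}\geq(2/3)^k>(1/4)^k>\delta^{2k}$. Then, using $W\mathbf{1}_{W>0}\geq W$ pointwise together with Cauchy--Schwarz,
\[
0<\EE(W)\leq\EE(W\mathbf{1}_{W>0})\leq\sqrt{\EE(W^2)}\,\sqrt{\PP(W>0)},
\]
and squaring yields $\PP(Z>\delta^{2k})=\PP(W>0)\geq(\EE(W))^2/\EE(W^2)$, which is precisely \eqref{eq:propLargeProb}.

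For \eqref{eq:propSmallProb} I would truncate at a level $t$ slightly below $A:=\sqrt[2k]{\EE_{2k}}$, bounding the tail contribution to the $2k$-th moment against $\EE_{4k}$ by Cauchy--Schwarz (rather than by the crude bound $|Y|\le\sqrt n$, which is too lossy):
\[
\EE_{2k}=\EE\big(|Y|^{2k}\mathbf{1}_{|Y|<t}\big)+\EE\big(|Y|^{2k}\mathbf{1}_{|Y|\ge t}\big)<t^{2k}+\sqrt{\EE_{4k}}\,\sqrt{\PP(|Y|\ge t)}.
\]
Rearranging gives $\PP(|Y|\ge t)>(\EE_{2k}-t^{2k})^2/\EE_{4k}$. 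Choosing $t$ so that $t^{2k}=\EE_{2k}-\sqrt{\eps\,\EE_{4k}}$ makes the right-hand side equal to $\eps$, so it remains to estimate $A-t$. A first-order expansion of $s\mapsto(\EE_{2k}-s)^{1/(2k)}$ at $s=0$ yields
\[
A-t=\frac{\sqrt{\EE_{4k}}}{2k\,\EE_{2k}^{1-1/(2k)}}\,\sqrt{\eps}\,\big(1+o(1)\big)\qquad(\eps\to0),
\]
and since $\EE_{2k}$ and $\EE_{4k}$ are bounded away from $0$ and $\infty$ by Lemma~\ref{lem:higherMoments}, the prefactor is a positive constant. A constant multiple of $\sqrt{\eps}$ is $\eps^{1/2-o(1)}$, giving the stated threshold and establishing \eqref{eq:propSmallProb}.

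The main obstacle is the bookkeeping in the second part: one must confirm that the expansion of $(\EE_{2k}-\sqrt{\eps\,\EE_{4k}})^{1/(2k)}$ is genuinely driven by the $\sqrt{\eps}$ term, with the higher-order corrections being $o(\sqrt\eps)$, and that the bounded positive prefactor is correctly absorbed into the vanishing exponent $o(1)$ so that the threshold $A-\eps^{1/2-o(1)}$ is legitimate. By contrast, the only real constraint in the first part is the positivity $\EE_{2k}>\delta^{2k}$, which the Jensen bound above settles cleanly.
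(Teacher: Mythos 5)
Your proof is correct, and it reaches both bounds by a genuinely different route than the paper. The paper runs everything through a single parametrized Chebyshev-type inequality: writing $Y=\frac{1}{\sqrt{n}}\sum_{i=1}^{n}X_i$, Markov's inequality applied to $(Y^{2k}-c^{k})^{2}$ gives
\[
\PP\left(|Y|>\sqrt[2k]{c^{k}-\sqrt{\lambda}}\right)\geq 1-\frac{c^{2k}-2c^{k}\EE_{2k}+\EE_{4k}}{\lambda},
\qquad 0<\lambda<c^{2k},
\]
after which \eqref{eq:propLargeProb} is obtained by setting $\lambda=(c^{k}-\delta^{2k})^{2}$ and optimizing over $c$, and \eqref{eq:propSmallProb} is obtained by choosing $\lambda$ so that the right-hand side equals $\eps$ and then carrying out a two-parameter asymptotic expansion in which $\eta=\eps c^{k}$ must be threaded between $\sqrt{\eps}\ll\eta\ll1$. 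Your Paley--Zygmund argument for \eqref{eq:propLargeProb} produces the paper's optimized constant in one step; this is no coincidence, since the Cauchy--Schwarz bound $\PP(W>0)\geq(\EE W)^{2}/\EE(W^{2})$ is exactly the supremum of the paper's family of bounds over the center $c$. It also has the merit of recording the positivity condition $\EE_{2k}>\delta^{2k}$ explicitly (via Jensen, $\EE_{2k}\geq(1-1/q)^{k}\geq(2/3)^{k}>\delta^{2k}$ for $\delta<1/2$), a hypothesis the paper uses tacitly when it divides by $\EE_{2k}-\delta^{2k}$ in its optimal choice of $c^{k}$. For \eqref{eq:propSmallProb}, your truncation bound $\PP(|Y|\geq t)\geq(\EE_{2k}-t^{2k})^{2}/\EE_{4k}$ has a lossier denominator than the paper's (you discard the terms $-2t^{2k}\EE_{2k}+t^{4k}$), but this costs nothing because only $\PP\geq\eps$ is required; in exchange you get an explicit threshold $t^{2k}=\EE_{2k}-\sqrt{\eps\,\EE_{4k}}$ and a one-line Taylor expansion in place of the paper's delicate $\eta$-bookkeeping. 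The one point that must be said (and you do say it) is uniformity: the prefactor $\sqrt{\EE_{4k}}\big/\big(2k\,\EE_{2k}^{1-1/(2k)}\big)$ and the $o(1)$ in the exponent are controlled uniformly in large $n$ because Lemma \ref{lem:higherMoments}, together with the Jensen lower bound, keeps $\EE_{2k}$ and $\EE_{4k}$ bounded above and away from zero, so a constant multiple of $\sqrt{\eps}$ is legitimately of the form $\eps^{\frac12-o(1)}$. With that noted, your argument is complete and somewhat more elementary than the paper's.
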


\begin{proof}
Let $c\geq1$ be a parameter to be determined. Using second moment Markov's inequality, one can show that for $0<\lambda<{c}^{2k}$,
\begin{equation}
\label{eq:markov}
\begin{split}
\mathbb{P}\left(\left|\dfrac{1}{\sqrt{n}}\sum_{i=1}^{n}X_i\right|>\sqrt[2k]{c^k-\sqrt{\lambda}}\right)
&=\mathbb{P}\left(\left(\dfrac{1}{\sqrt{n}}\sum_{i=1}^{n}X_i\right)^{2k}-c^{k}>-\sqrt{\lambda}\right)\\
&\geq\mathbb{P}\left(\left|\left(\dfrac{1}{\sqrt{n}}\sum_{i=1}^{n}X_i\right)^{2k}-c^k\right|<\sqrt\lambda\right)\\
&\geq1-\dfrac{1}{\lambda}\EE\left(\left(\left(\dfrac{1}{\sqrt{n}}\sum_{i=1}^{n}X_i\right)^{2k}-c^k\right)^2\right)\\
&=1-\dfrac{c^{2k}-2c^{k}\EE_{2k}+\EE_{4k}}{\lambda}.
\end{split}
\end{equation}

To prove \eqref{eq:propLargeProb}, we take $\lambda=(c^k-\delta^{2k})^{2}$, where $\delta>0$ is small. Maximizing the right hand side of \eqref{eq:markov} over $c$, we see that the maximum is
\[
1-\dfrac{c^{2k}-2c^{k}\EE_{2k}+\EE_{4k}}{(c^k-\delta^{2k})^2}=\dfrac{(\EE_{2k}-\delta^{2k})^2}{\EE_{4k}-2\delta^{2k}\EE_{2k}+\delta^{4k}},
\]
when
\[
c^k=\dfrac{\EE_{4k}-\delta^{2k}\EE_{2k}}{\EE_{2k}-\delta^{2k}}.
\]

Now we prove \eqref{eq:propSmallProb}. To make
\[
\mathbb{P}\left(\left|\dfrac{1}{\sqrt{n}}\sum_{i=1}^{n}X_i\right|>\sqrt[2k]{c^{k}-\sqrt{\lambda}}\right)\geq\eps,
\]
we take
\[
\lambda=\dfrac{c^{2k}-2c^{k}\EE_{2k}+\EE_{4k}}{1-\eps}.
\]
Since we require $c^{2k}>\lambda$, it follows that
\[
c^{2k}-2c^k\EE_{2k}+\EE_{4k}<c^{2k}-c^{2k}\eps,
\]
and therefore
\[
\eta:=\eps{c^k}<2\EE_{2k}-\dfrac{\EE_{4k}}{c^k}<2\EE_{2k}.
\]
To compute the leading terms of $\sqrt[2k]{c^k-\sqrt{\lambda}}$ as $\eps\to0$, we first use the binomial series to expand the numerator of $\sqrt{\lambda}$ as
\begin{equation}
\label{eq:numeratorOfLambda}
\begin{split}
c^{k}\sqrt{1-\left(\dfrac{2\EE_{2k}}{c^k}-\dfrac{\EE_{4k}}{c^{2k}}\right)}
=c^{k}\left(1-\EE_{2k}\dfrac{1}{c^k}+\dfrac{\EE_{4k}-\EE_{2k}^2}{2}\dfrac{1}{c^{2k}}+O\left(\dfrac{1}{c^{3k}}\right)\right),
\end{split}
\end{equation}
as $c\to\infty$. Indeed, the bracket inside the square root in \eqref{eq:numeratorOfLambda} is small in view of Lemma \ref{lem:higherMoments}. To get $\sqrt\lambda$, we multiply \eqref{eq:numeratorOfLambda} to
\[
\dfrac{1}{\sqrt{1-\eps}}=1+\dfrac12\eps+\dfrac38\eps^2+O(\eps^3).
\]
Substituting $c^{k} = \eta/\eps$, we have
\[
\begin{split}
&\quad\,\,c^{k}-\sqrt{\lambda}\\
&=\dfrac{\eta}{\eps}\left[1-\left(1+\dfrac12\eps+\dfrac38\eps^2+O(\eps^3)\right)\left(1-\dfrac{\EE_{2k}}{\eta}\eps+\dfrac{\EE_{4k}-\EE_{2k}^2}{2\eta^2}\eps^2+O\left(\dfrac{\eps^3}{\eta^3}\right)\right)\right]\\
&=\EE_{2k}-\dfrac{1}{2}\eta+\left(\dfrac{\EE_{2k}^2-\EE_{4k}}{2}+\dfrac{\EE_{2k}}{2}\eta-\dfrac{3}{8}\eta^{2}\right)\dfrac{\eps}{\eta}+O\left(\dfrac{\eps^2}{\eta^2}\right).
\end{split}
\]
We may now take $\eta$ satisfying $\sqrt\eps\ll\eta\ll1$ so that the terms in the last line are indeed arranged in decreasing order of magnitude. Therefore,
\[
\sqrt[2k]{c^k-\sqrt\lambda}=\sqrt[2k]{\EE_{2k}-\eps^{\frac12-o(1)}}=\sqrt[2k]{\EE_{2k}}-\eps^{\frac12-o(1)},
\]
as $\eps\to0$, establishing \eqref{eq:propSmallProb}.
\end{proof}

\section{Proofs of the Theorems}

\begin{proof}[Proof of Theorem \ref{thm:largeProb}]
Write $n=\#S$, as in Section \ref{sec:ellipticPreliminaries}. Given $\eps>0$, we choose $N$ large enough so that $c_{q,k}/N<\eps/3$, where the constant $c_{q,k}$ is from Lemma \ref{lem:constantc_k}, and the error appearing in \eqref{eq:E2k^2OverE4k} has an absolute value less than $\eps/3$.

Since $\EE_{4k}>\EE_{2k}\geq1/2$, there exists a small $\delta>0$ such that
\[
\left|\dfrac{(1-\frac{\delta^{2k}}{\EE_{2k}})^2}{1-2\delta^{2k}\frac{\EE_{2k}}{\EE_{4k}}+\delta^{4k}\frac{1}{\EE_{4k}}}-1\right|<\dfrac{\eps}{3}\dfrac{\EE_{4k}}{\EE_{2k}^2}.
\]
Together with \eqref{eq:probabilityInequality}, \eqref{eq:propLargeProb} and \eqref{eq:E2k^2OverE4k}, we have
\[
\begin{split}
\PP(\abs{\#E(\Fq,S)-n}>\delta\sqrt{n})
&\geq\PP\left(\abs{\sum_{i=1}^nX_i}>\delta\sqrt{n}\right)-\dfrac{c_{q,k}}{q}\\
&\geq\dfrac{\EE_{2k}^2}{\EE_{4k}}\dfrac{(1-\frac{\delta^{2k}}{\EE_{2k}})^2}{1-2\delta^{2k}\frac{\EE_{2k}}{\EE_{4k}}+\delta^{4k}\frac{1}{\EE_{4k}}}-\dfrac{\eps}{3}\\
&\geq\dfrac{\EE_{2k}^2}{\EE_{4k}}-\dfrac{\eps}{3}-\dfrac{\eps}{3}\\
&\geq\left(\dfrac{\sqrt{2\pi}}{e}\right)^32^{1/2-2k}-\eps,
\end{split}
\]
as desired.
\end{proof}

\begin{proof}[Proof of Theorem \ref{thm:smallProb}]
Similarly we write $n=\#S$. Using the estimate \eqref{eq:2kRootE_2k}, we choose $N$ so large and $\eps$ so small that the interested lower bound in \eqref{eq:propSmallProb} is large:
\[
\sqrt[2k]{\EE_{2k}}-\eps^{\frac12-o(1)}>0.8577\sqrt{k}.
\]
Here $0.8577$ is a number strictly smaller than $\sqrt{2/e}$. Now, increase $N$ if necessary, we also have $c_{q,k}/N<\eps/2$. Then, by \eqref{eq:probabilityInequality} and \eqref{eq:propSmallProb}, we have
\[
\begin{split}
&\quad\,\,\PP(\abs{\#E(\Fq,S)-n}>0.8577\sqrt{k}\sqrt{n})\\
&\geq\PP\left(\abs{\sum_{i=1}^nX_i}>0.8577\sqrt{k}\sqrt{n}\right)-\dfrac{c_{q,k}}{q}\\
&\geq\PP\left(\abs{\sum_{i=1}^nX_i}>\left(\sqrt[2k]{\EE_{2k}}-\eps^{\frac12-o(1)}\right)\sqrt{n}\right)-\dfrac{\eps}{2}\\
&\geq\dfrac{\eps}{2}.
\end{split}
\]
\end{proof}

\section{Sets with exceptionally large discrepancy}

So far we have considered sets of arbitrarily large size. We will show, as one may expect, that if $n$ is a constant, then for each prime $p$ large enough, there is a probability $\alpha>0$ that the error is much larger than $\sqrt{n}$, for $\beta\binom{p}{n}$ of the subsets $S\subset\Fp$ of size $n$. In particular, for each $n$, there is a probability $2^{-n-1}$ that a randomly chosen subset $S\subset\Fp$ of size $n$ has the following property --- a randomly chosen monic separable cubic $f$ over $\Fp$ has a probability $2^{-n-1}$ so that $f(S)$ consists only of non-zero quadratic residues or quadratic non-residues.\\

Let $\mathcal F$ be the set of monic, separable cubics over $\Fp$. Note that $\#\mathcal F = p^3-p^2$. Let $m,n$ be constants independent of $p$ such that $n-2m>\sqrt{n}$. We construct a bipartite graph $G$ with $\binom{p}{n}$ `S-vertices' in one partition, each associated with a set $S \subset \Fp$ of size $n$, and $p^3 - p^2$ `f-vertices' in the other, each associated with an $f \in \mathcal F$. We draw an edge between the vertex corresponding to $f$ and the vertex corresponding to $S$ when $\left| \sum_{s_i \in S} \legendre{f(s_i)}{p} \right| \geq n-2m$. Fix $f \in \mathcal F$, and let $\mathcal Q\subset\Fp$ be the set of points mapped by $f$ to a non-zero quadratic residue, and $\mathcal N\subset\Fp$ be those points mapped to a non-residue. Let $p/2+A_f$ be the size of the larger of these two sets. Then the degree of the vertex associated to $f$ in $G$ is at least
\begin{equation}
\label{eq:minDegree}
 \binom{p/2 - A_f}{m} \binom{ p/2 + A_f }{n-m}.
\end{equation}
By Hasse's theorem we have $A_f \leq \sqrt{p}$, and so \eqref{eq:minDegree} is bounded below by
\[
\binom{p/2 - \sqrt{p} }{m} \binom{p/2 -\sqrt{p} }{n-m}=\binom{p}{n}\left[\binom{n}{m}2^{-n}+o(1)\right],
\]
as $p\to\infty$.
Thus the number of edges in our graph, $E$, is at least
\[
\binom{p}{n}\left[\binom{n}{m}2^{-n}+o(1)\right](p^3 - p^2).
\]
Now if only $\beta\binom{p}{n}$ of the $S$-vertices achieve degree $\geq \alpha(p^3 - p^2)$, then we have
\[
E \leq \beta \binom{p}{n}(p^3-p^2) + \binom{p}{n}(1-\beta)\alpha (p^3 - p^2),
\]
and so
\[
\beta\geq\dfrac{1}{1-\alpha}\left[\binom{n}{m}2^{-n}-\alpha+o(1)\right]>0,
\]
as $p\to\infty$, provided that $\alpha>0$ is small enough.

\bibliographystyle{acm}
\bibliography{ref_elliptic}

\end{document}